\newtheorem{Theorem}{Theorem}[section]
\newtheorem{Lemma}[Theorem]{Lemma}
\numberwithin{equation}{section}
\begin{document}
\sloppy
\title[On an adjustment to the Fowler equation]
{On an adjustment to the Fowler equation}

\author[B. \'Alvarez-Samaniego]{Borys \'Alvarez-Samaniego}
\address{\vspace{-8mm}N\'ucleo de Investigadores Cient\'{\i}ficos\\
	    Facultad de Ciencias\\
	    Universidad Central del Ecuador (UCE)\\
    	Quito, Ecuador}
\email{balvarez@uce.edu.ec, borys\_yamil@yahoo.com}

\author[W. P. \'Alvarez-Samaniego]{Wilson P. \'Alvarez-Samaniego}
\address{\vspace{-8mm}N\'ucleo de Investigadores Cient\'{\i}ficos\\
        Facultad de Ciencias\\
	    Universidad Central del Ecuador (UCE)\\
    	Quito, Ecuador}
\email{wpalvarez@uce.edu.ec, alvarezwilson@hotmail.com}

\author[K. Lloacana]{Kevin Lloacana-Unda}
\address{\vspace{-8mm}N\'ucleo de Investigadores Cient\'{\i}ficos\\
        Facultad de Ciencias\\
    	Universidad Central del Ecuador (UCE)\\
    	Quito, Ecuador}
\email{kpaul\_rap@hotmail.com}

\date{July 18, 2023.} 
\subjclass[2020]{47J35; 35Q35}
\keywords{Fowler equation; nonlocal evolution equation}

%\thanks{Data sharing not applicable to this article as no datasets were 
%generated or analysed during the current study.}

\begin{abstract}
Following closely the analysis performed by Andrew C. Fowler to derive 
the first canonical equation for nonlinear dune dynamics, but considering 
some appropriate changes of variables, suitable scalings, and by neglecting 
higher order terms, we obtain an adaptation of the aforementioned equation, 
which contains an additional term, to describe dune morphodynamics. 
\end{abstract}
%%%%%%%%%%%%%%%%%%%%%%%%%%%%%%%%%%%%%%%%%%%%%%%%%%%%%%%%%%%%%%%%%%%%%%%%%%%%%%%%%%%%%%%%
%%%%%%%%%%%%%%%%%%%%%%%%%%%%%%%%%%%%%%%%%%%%%%%%%%%%%%%%%%%%%%%%%%%%%%%%%%%%%%%%%%%%%%%%
\maketitle
\section{\label{sec:level1}INTRODUCTION} %codigo de seccion

Dunes are landforms of sand which are formed by the erosive effect of some mobile medium 
above an erodible substrate. 
There are several types of dunes, such as Barchan, transverse, star, parabolic, and 
others (\cite{D,B,C}). 

A. C. Fowler, relatively recently, proposed the first canonical equation, given by 
(\ref{e1})  below, to model nonlinear dune formation (\cite{D,B,C}). Then, 
B. Álvarez-Samaniego, and 
P. Azerad showed, in \cite{A}, the existence of travelling-waves, and local 
well-posedness, in a subspace of $C^{1}_{b}(\mathbb{R})$, of the aforesaid 
equation, which is called the Fowler equation.

At the beginning of Section \ref{sc:deduc}, we mention some previous results that 
can be found in \cite{A,D,B,C}. More specifically, we first write the Exner equation, 
given in (\ref{e2}), which describes the conservation of bed material (\cite{A,C}). 
Next, by taking into account an additional force due to gravity, an approximation for 
the net stress is written  in (\ref{apt}), which was obtained from a suitable analysis 
considering the Orr-Sommerfeld  equation from a Saint-Venant type turbulent model 
(\cite{C}). Also, an adequate estimate for  the bedload transport is exhibited in 
(\ref{qpp}), see \cite{A,D,B}. 

Subsequently, by performing a convenient change of variable, (\ref{cv2}), and a 
proper scaling of the dependent variable, (\ref{resc1}), we obtain a suited 
approximation, (\ref{eaf3}), to  analyze sand dune dynamics. Then, by neglecting 
the terms of order $O\left( \frac{\varepsilon}{\delta^{\frac{1}{3}}}\right)$,  with 
$0 < \varepsilon \ll 1 \ll \delta$, where $ \varepsilon$ 
and $\delta$  are typical physical scales associated with the amplitude, 
and the length of the dune, respectively, we get (\ref{ecf11}). Finally,  by making 
a suitable change of variable,  (\ref{cv5}), and a handy scaling, (\ref{rfb}), an 
adaptation of the Fowler equation is achieved  in (\ref{ecfff}), which is the main 
result of this manuscript. 
\section{DEDUCTION OF THE EQUATION}\label{sc:deduc}
We begin this section by exposing some previous results, given in \cite{A,D,B,C}. 
The Fowler equation for nonlinear dune formation is given by
\begin{equation}\label{e1} 
 \partial_{2} u (x,t) + \partial_{1} \bigg[ \frac{u^{2}}{2} + \int \limits^{+ \infty}_{0} 
 \xi^{- \frac{1}{3}}\partial_{1} u\left( \cdot - \xi, \cdot\right) d\xi - \partial_{1}
 u \bigg](x,t) = 0, 
\end{equation}
for all $(x,t) \in \mathbb{R} \times [0,+\infty)$, where $u$ represents the dune amplitude. 
The derivation of equation (\ref{e1}) follows from the Exner equation, namely,
\begin{equation}\label{e2}
    \partial_{2} u (x,t) + \partial_{1} (q \circ \tau) (x,t) = 0,
\end{equation}
for all $(x,t) \in \mathbb{R} \times [0,+\infty)$, where  $q$ is the bedload transport, 
and $\tau$ is the stress exerted by the fluid on the erodible bed. Moreover, the stress, 
$\tau$, can be approximated by the expression 
\begin{equation}\label{apt}
    \tau (x,t) \approx \hspace{0.1cm} 1 + u (x,t)  + \int\limits^{+ \infty}_{0} 
    \xi^{-\frac{1}{3}}\partial_{1} u \left( x- \xi,t\right) d \xi + u^{2} (x,t)- 
    \partial_{1} u (x,t), 
\end{equation}
for all $(x,t) \in \mathbb{R} \times [0,+\infty)$.
By considering function $q$, proposed by Meyer-Peter, and Müller, a Taylor expansion is 
performed, around the point $1$, up to order $2$, which is represented by the following 
approximation
    \begin{equation}\label{qpp}
        q(\tau) \approx q(1) + q'(1) (\tau -1) + \frac{q''(1)}{2} ( \tau -1)^{2},
    \end{equation}
for all real number $\tau$ belonging to some neighborhood of $1$. 
Let $(x,t) \in \mathbb{R} \times [0,+\infty)$. By using the Chain Rule, we have that 
    \begin{equation} \label{ee5}
    \begin{split}
        \partial_{1} (q \circ  \tau) (x,t)&= q'(\tau (x,t)) \partial_{1} \tau (x,t).\\
    \end{split}
    \end{equation}
Now, by employing (\ref{qpp}), and (\ref{apt}), we get 
\begin{align}
 q'(\tau(x,t))  \approx& \, q'(1) + q''(1)(\tau (x,t) - 1)\label{e3}\\
 \approx& \, q'(1) +q''(1)\bigg( u (x,t) + u^{2}(x,t) - \partial_{1} u(x,t) \nonumber\\
 &+ \int \limits^{+ \infty}_{0} \xi^{-\frac{1}{3}}\partial_{1} u \left( x- \xi,t\right)
 d \xi  \bigg). \nonumber
\end{align}
It follows from (\ref{apt}) that
\begin{align}
  \partial_{1} \tau (x,t)  \approx& \, \partial_{1} u (x,t) +2u(x,t) \partial_{1} u (x,t) 
  - \partial_{1}^{2} u(x,t) \nonumber\\
  &+\partial_{1} \left( \int \limits^{+ \infty}_{0} \xi^{- \frac{1}{3}}\partial_{1} 
  u\left( \cdot - \xi, \cdot \right) d\xi  \right)(x,t). \label{e4}
\end{align}
By using (\ref{e3}), and (\ref{e4}), we obtain that 
\begin{align}
 q'&(\tau (x,t)) \partial_{1} \tau (x,t)\nonumber\\
 \approx& \, \left[ q'(1) +q''(1)(\tau(x,t) -1) \right] \partial_{1} 
 \tau (x,t)\nonumber\\
 =& \, q'(1)\partial_{1} \tau (x,t) + q''(1)(\tau(x,t)-1)\partial_{1} 
 \tau (x,t)  \nonumber\\
 \approx& \, q'(1)\partial_{1} u (x,t)  +q'(1) \bigg[2u (x,t) \partial_{1} 
 u  (x,t) - \partial_{1}^{2} u  (x,t) \nonumber\\
 & + \partial_{1}  \left( \int \limits^{+ \infty}_{0} 
 \xi^{- \frac{1}{3}}\partial_{1} u \left( \cdot - \xi, \cdot \right) d\xi  \right)(x,t) 
 \bigg] + q''(1)(\tau(x,t)  -1) \partial_{1} \tau (x,t). \label{e5}
\end{align}
From ($ \ref{apt}$), and ($\ref{e4}$), we derive that 
\begin{align}
 (\tau&(x,t)-1)\partial_{1} \tau (x,t)\nonumber\\
 \approx& \, \bigg[ u(x,t) + u^{2}(x,t) + \int \limits^{+ \infty}_{0} 
 \xi^{- \frac{1}{3}}\partial_{1} u\left( x- \xi, t\right) d\xi - \partial_{1} 
 u (x,t)  \bigg] \bigg[\partial_{1} u(x,t)\nonumber\\
 & + 2 u(x,t) \partial_{1} u(x,t) - \partial_{1}^{2} u(x,t)
 + \partial_{1} \left( \int \limits^{+ \infty}_{0} \xi^{- \frac{1}{3}}
 \partial_{1} u\left( \cdot - \xi, \cdot \right) d\xi  \right)(x,t)\bigg]\nonumber\\    
 =& \, u(x,t) \partial_{1} u(x,t) + 2 u^{2}(x,t) \partial_{1} u(x,t) - u(x,t) 
 \partial_{1}^{2} u(x,t) \nonumber \\
 &+u(x,t) \partial_{1} \left( \int \limits^{+ \infty}_{0} \xi^{- \frac{1}{3}}
 \partial_{1} u\left( \cdot - \xi, \cdot \right) d\xi  \right)(x,t)+ u^{2}(x,t) 
 \partial_{1} u(x,t)  \nonumber\\
 &+2 u^{3}(x,t) \partial_{1} u(x,t) - u^{2}(x,t) \partial_{1}^{2} u(x,t) + u^{2}(x,t) 
 \partial_{1}\left( \int \limits^{+ \infty}_{0} \xi^{- \frac{1}{3}}
 \partial_{1} u\left( \cdot - \xi, \cdot \right) d\xi  \right)(x,t)\nonumber\\
 &+ \partial_{1} u(x,t) \left( \int \limits^{+ \infty}_{0} \xi^{- \frac{1}{3}}
 \partial_{1} u\left( x - \xi, t\right) d\xi  \right)+ 2 u(x,t) \partial_{1} u(x,t) 
 \left( \int \limits^{+ \infty}_{0} 
 \xi^{- \frac{1}{3}}\partial_{1} u\left( x - \xi, t\right) d\xi  \right)\nonumber\\
 & +\left( \int \limits^{+ \infty}_{0} \xi^{- \frac{1}{3}}\partial_{1} u\left( x- 
 \xi, t\right) d\xi  \right)\cdot \partial_{1} \left( \int \limits^{+ \infty}_{0} 
 \xi^{- \frac{1}{3}}
 \partial_{1} u\left( \cdot - \xi, \cdot\right) d\xi  \right)(x,t)\nonumber\\
 &-\partial_{1}^{2} u(x,t) \left( \int \limits^{+ \infty}_{0} \xi^{- \frac{1}{3}}
 \partial_{1} u\left( x - \xi, t\right) d\xi  \right)- \left( \partial_{1} u(x,t)
 \right)^{2} - 2 u(x,t) \left( \partial_{1} u(x,t)\right)^{2}\nonumber\\
 & + \partial_{1} u(x,t) \partial_{1}^{2} u(x,t)- \partial_{1} u(x,t) \partial_{1}  
 \left( \int \limits^{+\infty}_{0} \xi^{- \frac{1}{3}}
 \partial_{1} u\left( \cdot- \xi, \cdot\right) d\xi  \right)(x,t).\label{dta}
\end{align}
By substituting (\ref{dta}) into (\ref{e5}), it follows from (\ref{e2}), 
and (\ref{ee5}) that
\begin{align*}
  &\partial_{2} u(x,t)+ q'(1)\partial_{1} u (x,t)  +q'(1) \bigg[2u (x,t) 
 \partial_{1} u (x,t)- \partial_{1}^{2} u (x,t) \nonumber\\
 & + \partial_{1} \left( \int \limits^{+ \infty}_{0} 
 \xi^{- \frac{1}{3}}\partial_{1} u\left( \cdot- \xi, \cdot\right) d\xi  \right)(x,t)
 \bigg]+ q''(1) u(x,t) \partial_{1} u(x,t) \nonumber\\
 &+ q''(1) \bigg[ 3 u^{2}(x,t) \partial_{1} 
 u(x,t)- u(x,t) \partial_{1}^{2} u(x,t)- u^{2}(x,t) \partial_{1}^{2} u(x,t) \nonumber\\  
  &+ 
 2 u^{3}(x,t) \partial_{1} u(x,t)+u(x,t) \partial_{1} \left( \int \limits^{+ \infty}_{0} 
 \xi^{- \frac{1}{3}}
 \partial_{1} u\left( \cdot- \xi, \cdot\right) d\xi  \right)(x,t) \nonumber\\
\end{align*}
\begin{align}
 &  + u^{2}(x,t) \partial_{1} \left( \int \limits^{+ \infty}_{0} \xi^{- \frac{1}{3}}
 \partial_{1} u\left( \cdot- \xi, \cdot\right) d\xi  \right)(x,t)+ \partial_{1} u(x,t) 
 \left( \int \limits^{+ \infty}_{0} \xi^{- \frac{1}{3}}
 \partial_{1} u\left( x- \xi, t\right) d\xi  \right)\nonumber\\
 & - \left( \partial_{1} 
 u(x,t)\right)^{2}+ 2 u(x,t) \partial_{1} u(x,t) \left( \int \limits^{+ \infty}_{0} 
 \xi^{- \frac{1}{3}} \partial_{1} u\left( x- \xi, t\right) d\xi  \right)\nonumber\\
 &+ \left( \int \limits^{+ \infty}_{0} \xi^{- \frac{1}{3}}\partial_{1} 
 u\left( x- \xi, t\right) d\xi  \right)\cdot \partial_{1} \left( 
 \int \limits^{+ \infty}_{0} 
 \xi^{- \frac{1}{3}} \partial_{1} u\left( \cdot - \xi, \cdot\right) d\xi  \right)
 (x,t)\nonumber\\
  &-\partial_{1}^{2} u(x,t) \left( \int \limits^{+\infty}_{0} \xi^{- \frac{1}{3}}
 \partial_{1} u\left( x- \xi, t\right) d\xi  \right)- \partial_{1} u(x,t) \partial_{1}  
 \left( \int \limits^{+ \infty}_{0} 
 \xi^{- \frac{1}{3}}\partial_{1} u\left( \cdot- \xi, \cdot\right) d\xi 
 \right)(x,t)\nonumber\\
 &+ \partial_{1} u(x,t) \partial_{1}^{2} u(x,t) - 2 u(x,t) \left( \partial_{1} 
 u(x,t) \right)^{2}\bigg] \approx 0.\label{emdn}   
\end{align}
By employing the Chain Rule, we obtain from (\ref{emdn}) that 
\begin{align}
 &\partial_{2} u(x,t)+ q'(1)\partial_{1} u (x,t) + [2q'(1) +q''(1)] u (x,t) 
 \partial_{1} u (x,t) - q'(1)
 \partial_{1}^{2} u (x,t) \nonumber\\
 & + q'(1) \partial_{1}  \left( \int ^{+ \infty}_{0} 
 \xi^{- \frac{1}{3}}\partial_{1} u\left(\cdot- \xi, \cdot\right) 
 d\xi  \right)(x,t)+ 2 q''(1) u^{3}(x,t) \partial_{1} u(x,t)\nonumber\\
 &  - q''(1)u(x,t)\partial_{1}^{2} u(x,t) + q''(1) \partial_{1} u(x,t) 
 \partial_{1}^{2} u(x,t) -q''(1) 
 \left( \partial_{1} u(x,t)\right)^{2}\nonumber\\
 &+ q''(1) \partial_{1}  \bigg [ u \bigg(\int ^{+ \infty}_{0} 
 \xi^{- \frac{1}{3}}\partial_{1} u\left( \cdot- \xi, \cdot\right) 
 d\xi \bigg)\bigg](x,t) \nonumber\\
 & + q''(1) \partial_{1}  \bigg [u^{2} \bigg( \int ^{+ \infty}_{0} 
 \xi^{- \frac{1}{3}}\partial_{1} u\left( \cdot- \xi, \cdot\right) d\xi 
 \bigg)\bigg](x,t) \nonumber\\
 &- q''(1) \partial_{1}  \bigg [ \partial_{1} u \bigg( \int ^{+ \infty}_{0} 
 \xi^{- \frac{1}{3}}\partial_{1} u\left( \cdot- \xi, \cdot\right) d\xi 
 \bigg)\bigg] (x,t) \nonumber\\
 & + q''(1) \partial_{1}  \bigg [\frac{1}{2} \bigg( \int ^{+ \infty}_{0} 
 \xi^{- \frac{1}{3}}\partial_{1} u\left( \cdot- \xi, \cdot\right) d\xi 
 \bigg)^{2} \bigg](x,t) \nonumber\\
 &+ 3q''(1) u^{2}(x,t) \partial_{1} u(x,t) - q''(1) u^{2}(x,t) \partial_{1}^{2} 
 u(x,t)\nonumber\\
 &- 2 q''(1) u(x,t) \left( \partial_{1} u(x,t) \right)^{2}\approx 0.\label{eaf}   
\end{align}
Taking into account some typical physical scales of the model (see Figure 
\ref{fig:PMo}),  namely, the amplitude, $0 < \varepsilon \ll 1$, and the 
length of the dune,  $1 \ll \delta$, we get $\varepsilon \ll \delta$. 
\begin{figure}[!htb]
\centering
\includegraphics[width=0.45\textwidth]{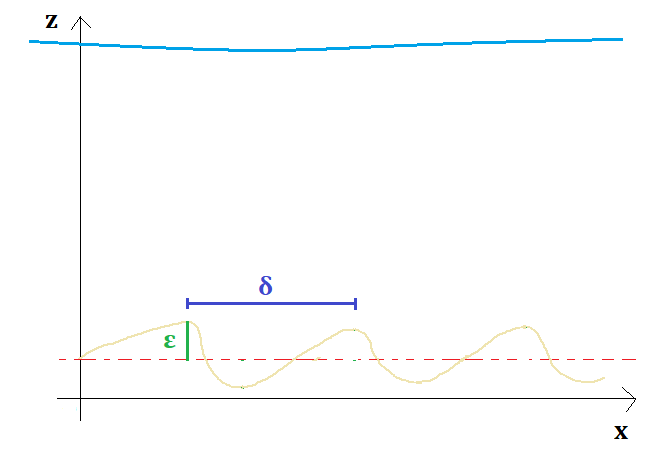}
\caption{Various parameters of the physical model.}
\label{fig:PMo}
\end{figure}
\newpage
\noindent 
We now consider the following function 
\begin{equation}\label{cv2}
 v(y,s) := u(\delta y, \delta s),
\end{equation}
for all $(y,s) \in \mathbb{R} \times [0,+\infty)$. 
Let $(y,s) \in \mathbb{R} \times [0,+\infty)$. Then, 
\begin{align}
 \partial_{1} u (\delta y,\delta s)  
 &= \frac{1}{\delta} \partial_{1}v (y,s), \label{pcv}\\ 
 \partial_{1}^{2} u(\delta y,\delta s)  
 &=\frac{1}{\delta^{2}}\partial_{1}^{2} v  (y,s), \label{pcvo}
\end{align}
and
\begin{equation}\label{pcv2}
 \partial_{2}u(\delta y, \delta s) = \frac{1}{\delta} \partial_{2}v (y,s).
\end{equation}
For all $x \in \mathbb{R}$, we write, $\varphi(x) := \chi_{(0,+\infty)}(x) 
\, x^{-\frac{1}{3}}$, where $ \chi_{(0,+ \infty)}: \mathbb{R} 
\rightarrow \mathbb{R}$  is the characteristic function of the interval 
$(0,+ \infty)$. Thus, 
\begin{align}
 \int \limits^{+ \infty}_{0} \xi^{- \frac{1}{3}}\partial_{1} u\left( 
 \delta y- \xi, \delta s\right) d\xi 
 &= \int \limits^{+ \infty}_{-\infty} \chi_{(0,+\infty)}(\xi) \, 
 \xi^{- \frac{1}{3}}
 \partial_{1} u\left( \delta y- \xi, \delta s\right) d\xi \nonumber\\
 &= \left( \varphi \ast \partial_{1} u(\cdot, \delta s) \right)
 (\delta y)\nonumber\\
 &=: \left( \varphi \ast \partial_{1} u(\cdot, \cdot) 
 \right)(\delta y, \delta s). \label{convT}
\end{align}
It follows from (\ref{pcv}), and (\ref{convT}) that
\begin{align*}
  \partial_{1} &\left( \int \limits^{+ \infty}_{0}  \xi^{- \frac{1}{3}}
 \partial_{1} u\left( \cdot- \xi, \cdot\right) d\xi \right)(\delta y, 
 \delta s) \nonumber\\   
\end{align*}
\begin{align} 
 :=& \, \lim_{h \rightarrow 0} \frac{1}{h} \Bigg[ \Bigg( \int 
 \limits^{+ \infty}_{0} 
 \xi^{- \frac{1}{3}}\partial_{1} u\left( \cdot- \xi, \cdot\right) d\xi \Bigg)
 (\delta y + h, \delta s) \nonumber\\
 &- \Bigg( \int \limits^{+ \infty}_{0}  \xi^{- \frac{1}{3}}\partial_{1} 
 u\left( \cdot- \xi, \cdot\right) d\xi \Bigg)(\delta y, \delta s)\Bigg] \nonumber\\
 =& \, \lim_{h \rightarrow 0} \frac{1}{h} \Bigg[  \int \limits^{+ \infty}_{0}  
 \xi^{- \frac{1}{3}}\partial_{1} u\left( \delta y + h - \xi, \delta s\right) d\xi-  
 \int \limits^{+ \infty}_{0}  \xi^{- \frac{1}{3}}\partial_{1} 
 u\left( \delta y- \xi, \delta s\right) d\xi \Bigg]\nonumber\\
 =& \, \lim_{h \rightarrow 0} \frac{1}{h} \Bigg[  \frac{1}{\delta}\int 
 \limits^{+ \infty}_{0} 
 \xi^{- \frac{1}{3}}\partial_{1} v \bigg( y + \frac{h}{\delta} - 
 \frac{\xi}{\delta},  s \bigg) 
 d\xi  - \frac{1}{\delta} \int \limits^{+ \infty}_{0} 
 \xi^{- \frac{1}{3}}\partial_{1} 
 v \bigg(  y- \frac{\xi}{\delta}, s \bigg) d\xi \Bigg] \nonumber\\
 =& \, \lim_{h \rightarrow 0} \frac{1}{h} \Bigg[  
 \frac{1}{\delta^{\frac{1}{3}}}\int 
 \limits^{+ \infty}_{0}  \zeta^{- \frac{1}{3}}\partial_{1} 
 v\left( y + \frac{h}{\delta} 
 - \zeta,  s\right) d\zeta - \frac{1}{\delta^{\frac{1}{3}}} \int 
 \limits^{+ \infty}_{0}  
 \zeta^{- \frac{1}{3}}
 \partial_{1} v\left(  y- \zeta, s\right) d\zeta \Bigg]\nonumber\\
 =& \, \frac{1}{\delta^{\frac{4}{3}}}\lim_{h \rightarrow 0} 
 \frac{1}{\frac{h}{\delta}} 
 \bigg[  \int \limits^{+ \infty}_{0}  \zeta^{- \frac{1}{3}}
 \partial_{1} v\left( y + 
 \frac{h}{\delta} - \zeta,  s\right) d\zeta-  \int 
 \limits^{+ \infty}_{0}  
 \zeta^{- \frac{1}{3}}\partial_{1} v\left(  y- 
 \zeta, s\right) d\zeta \bigg]\nonumber\\
 =& \, \frac{1}{\delta^{\frac{4}{3}}}\lim_{a \rightarrow 0} \frac{1}{ a} 
 \Bigg[ \int 
 \limits^{+ \infty}_{0}  \zeta^{- \frac{1}{3}}\partial_{1} v\left( y+a - \zeta,  
 s\right) d\zeta-  \int \limits^{+ \infty}_{0}  \zeta^{- \frac{1}{3}}\partial_{1} 
 v\left(  y- \zeta, s\right) d\zeta  \Bigg]\nonumber\\
 =& \, \frac{1}{\delta^{\frac{4}{3}}}\lim_{a \rightarrow 0} \frac{1}{ a} 
 \Bigg[\Bigg( \int\limits^{+ \infty}_{0}  \zeta^{- \frac{1}{3}}\partial_{1} 
 v\left( \cdot - \zeta,  \cdot\right) d\zeta \Bigg)(y+a,s)-  \Bigg(\int 
 \limits^{+ \infty}_{0} \zeta^{- \frac{1}{3}}\partial_{1} 
 v\left(  \cdot- \zeta, \cdot\right) d\zeta \Bigg) (y,s)\Bigg]\nonumber\\
 =:& \, \frac{1}{\delta^{\frac{4}{3}}} \partial_{1}  \Bigg( \int 
 \limits^{+ \infty}_{0}  \zeta^{- \frac{1}{3}}\partial_{1} 
 v\left( \cdot - \zeta,  \cdot\right) d\zeta \Bigg)(y,s)\nonumber\\
 =:& \, \frac{1}{\delta^{\frac{4}{3}}} \partial_{1}  
 (\varphi \ast \partial_{1} v(\cdot, \cdot))(y,s). \label{pcv3}
\end{align}
By using (\ref{pcv}), (\ref{cv2}), and (\ref{pcv3}), we have that
\begin{align*}
 \partial_{1}&  \Bigg [ u  \Bigg(\int \limits^{+ \infty}_{0} 
 \xi^{- \frac{1}{3}}\partial_{1} 
 u\left( \cdot- \xi, \cdot\right) d\xi \Bigg)\Bigg](\delta y,\delta s) \nonumber\\   
 =& \, \partial_{1} u(\delta y,\delta s) \Bigg( \int \limits^{+ \infty}_{0} 
 \xi^{- \frac{1}{3}}\partial_{1} u\left( \delta y- \xi, \delta s\right) d\xi  
 \Bigg) \nonumber\\
 &+ u(\delta y,\delta s) \partial_{1} \Bigg( \int \limits^{+ \infty}_{0} 
 \xi^{- \frac{1}{3}}\partial_{1} u\left( \cdot- \xi, \cdot\right) d\xi  \Bigg)
 (\delta y,\delta s) \nonumber\\   
\end{align*}
\begin{align}
 =& \, \frac{1}{\delta^{2}} \partial_{1} v(y,s) \Bigg( \int \limits^{+ \infty}_{0} 
 \xi^{- \frac{1}{3}}\partial_{1} v\left(  y- \frac{\xi}{\delta},  s\right) d\xi 
 \Bigg) \nonumber\\
 &+\frac{1}{\delta^{\frac{4}{3}}} v(y,s) \partial_{1} \Bigg( \int 
 \limits^{+ \infty}_{0} \zeta^{- \frac{1}{3}}\partial_{1} v \left( \cdot-  
 \zeta,  \cdot\right) d\zeta \Bigg)( y, s) \nonumber\\
 =& \, \frac{1}{\delta^{\frac{4}{3}}} \partial_{1} v(y,s) \Bigg( \int 
 \limits^{+ \infty}_{0} \zeta^{- \frac{1}{3}}\partial_{1} v\left( y- \zeta,  
 s\right) d\zeta  \Bigg) \nonumber\\
 &+ \frac{1}{\delta^{\frac{4}{3}}} v(y,s) \partial_{1} \Bigg( \int 
 \limits^{+ \infty}_{0}\zeta^{- \frac{1}{3}}\partial_{1} v \left( \cdot-  
 \zeta,  \cdot\right) d\zeta \Bigg)( y, s) \nonumber\\
 =& \, \frac{1}{\delta^{\frac{4}{3}}} \partial_{1} \bigg[v \cdot (\varphi \ast 
 \partial_{1} v(\cdot, \cdot)) \bigg](y,s), \label{rdc1}
\end{align}
\begin{align}
  \partial_{1}&  \Bigg [ u^{2} \Bigg(\int \limits^{+ \infty}_{0} 
 \xi^{- \frac{1}{3}}\partial_{1} u\left( \cdot- \xi, \cdot\right) d\xi \Bigg)
 \Bigg](\delta y,\delta s) \nonumber\\ 
 =& \, 2 u(\delta y, \delta s)\partial_{1} u(\delta y,\delta s) 
 \Bigg( \int \limits^{+ \infty}_{0} \xi^{- \frac{1}{3}}\partial_{1} 
 u\left( \delta y- \xi, \delta s\right) d\xi  \Bigg) \nonumber\\
 &+ u^{2}(\delta y,\delta s) \partial_{1} \Bigg( \int \limits^{+ \infty}_{0} 
 \xi^{- \frac{1}{3}}\partial_{1} u\left( \cdot- \xi, \cdot\right) d\xi  \Bigg)
 (\delta y,\delta s) \nonumber\\    
 =& \, \frac{2}{\delta^{2}} v(y,s)\partial_{1} v( y, s) 
 \Bigg( \int \limits^{+ \infty}_{0} \xi^{- \frac{1}{3}}\partial_{1} 
 v\left( y- \frac{\xi}{\delta},  s\right) d\xi  \Bigg) \nonumber\\
 &+ \frac{1}{\delta^{\frac{4}{3}}}v^{2}( y, s)  \partial_{1} \Bigg( \int 
 \limits^{+ \infty}_{0} \zeta^{- \frac{1}{3}}\partial_{1} 
 v\left( \cdot- \zeta, \cdot\right) d\zeta  \Bigg)( y, s) \nonumber\\
  =& \, \frac{1}{\delta^{\frac{4}{3}}} 2 v(y,s)\partial_{1} v( y, s) 
 \Bigg( \int \limits^{+ \infty}_{0} \zeta^{- \frac{1}{3}}\partial_{1} 
 v\left( y- \zeta,  s\right) d\zeta  \Bigg) \nonumber\\
 &+ \frac{1}{\delta^{\frac{4}{3}}}v^{2}( y, s)  \partial_{1} 
 \Bigg( \int \limits^{+ \infty}_{0} \zeta^{- \frac{1}{3}}\partial_{1} 
 v\left( \cdot- \zeta, \cdot\right) d\zeta  \Bigg)( y, s) \nonumber\\
 =& \, \frac{1}{\delta^{\frac{4}{3}}} \partial_{1} \bigg[ v^{2} \cdot 
 (\varphi \ast \partial_{1} v(\cdot,\cdot)) \bigg](y,s), \label{rdc2}    
\end{align}
%\vspace{-0.5 cm}
and
\begin{align}
 \partial_{1}&  \Bigg [\frac{1}{2} \Bigg( \int \limits^{+ \infty}_{0} 
 \xi^{- \frac{1}{3}}\partial_{1} u\left( \cdot- \xi, \cdot\right) 
 d\xi \Bigg)^{2} \Bigg](\delta y,\delta s) \nonumber\\
 =& \,  \Bigg(\int \limits^{+ \infty}_{0} \xi^{- \frac{1}{3}}
 \partial_{1} u\left( \delta y- \xi, \delta s\right) d\xi \Bigg) \cdot 
 \partial_{1} \Bigg( \int \limits^{+ \infty}_{0}  \xi^{- \frac{1}{3}}
 \partial_{1} u\left( \cdot- \xi, \cdot\right) d\xi \Bigg)(\delta y, 
 \delta s) \nonumber\\
 =& \, \frac{1}{\delta^{\frac{7}{3}}} \Bigg( \int \limits^{+ \infty}_{0} 
 \xi^{- \frac{1}{3}}\partial_{1} v\left(  y- \frac{\xi}{\delta},  s\right) 
 d\xi \Bigg) \cdot \partial_{1} \Bigg( \int \limits^{+ \infty}_{0} 
 \zeta^{- \frac{1}{3}} \partial_{1} v\left( \cdot- \zeta, \cdot\right) 
 d\zeta  \Bigg)( y, s) \nonumber\\
 =& \, \frac{1}{\delta^{\frac{5}{3}}} \Bigg(\int \limits^{+ \infty}_{0} 
 \zeta^{- \frac{1}{3}}\partial_{1} v\left(  y- \zeta,  s\right) d\zeta \Bigg) 
 \cdot \partial_{1} \Bigg( \int \limits^{+ \infty}_{0} \zeta^{- \frac{1}{3}}
 \partial_{1} v\left( \cdot- \zeta, \cdot\right) d\zeta  
 \Bigg)( y, s) \nonumber\\ 
 =& \, \frac{1}{\delta^{\frac{5}{3}}} \partial_{1} \bigg[ \frac{1}{2}
 (\varphi \ast \partial_{1} v(\cdot, \cdot ))^{2}\bigg](y,s). \label{rdc3}
\end{align}
%\vspace{-1cm}
Moreover, by employing (\ref{pcvo}), (\ref{pcv}), and (\ref{pcv3}), we get
\begin{align}
  \partial_{1}& \Bigg[ \partial_{1} u \Bigg( \int \limits^{+ \infty}_{0} 
 \xi^{- \frac{1}{3}}\partial_{1} u\left( \cdot- \xi, \cdot\right) d\xi 
 \Bigg)\Bigg] (\delta y,\delta s) \nonumber\\
 =& \, \partial_{1}^{2} u(\delta y, \delta s) \Bigg( \int 
 \limits^{+ \infty}_{0} \xi^{- \frac{1}{3}}\partial_{1} 
 u\left( \delta y- \xi, \delta s\right) d\xi \Bigg) \nonumber\\ 
 &+ \partial_{1} u(\delta y, \delta s) \partial_{1} \Bigg( \int 
 \limits^{+ \infty}_{0}  \xi^{- \frac{1}{3}}\partial_{1} 
 u\left( \cdot- \xi, \cdot\right) d\xi \Bigg)(\delta y, \delta s)\nonumber\\
 =& \, \frac{1}{\delta^{3}} \partial_{1}^{2} v( y, s) \Bigg(\int 
 \limits^{+ \infty}_{0} \xi^{- \frac{1}{3}}\partial_{1} 
 v\left(  y- \frac{\xi}{\delta},  s\right) d\xi \Bigg) \nonumber\\
 &+\frac{1}{\delta^{\frac{7}{3}}} \partial_{1} v( y, s) \partial_{1} 
 \Bigg( \int \limits^{+ \infty}_{0} \zeta^{- \frac{1}{3}}\partial_{1} 
 v\left( \cdot- \zeta, \cdot\right) d\zeta  \Bigg)( y, s) \nonumber\\
  =& \, \frac{1}{\delta^{\frac{7}{3}}} \partial_{1}^{2} v( y, s) 
 \Bigg( \int \limits^{+ \infty}_{0} \zeta^{- \frac{1}{3}}\partial_{1} 
 v\left(  y- \zeta,  s\right) d\zeta \Bigg) \nonumber\\     
 &+\frac{1}{\delta^{\frac{7}{3}}} \partial_{1} v( y, s) \partial_{1} 
 \Bigg( \int \limits^{+ \infty}_{0} \zeta^{- \frac{1}{3}}\partial_{1} 
 v\left( \cdot- \zeta, \cdot\right) d\zeta  \Bigg)( y, s) \nonumber\\
 =& \, \frac{1}{\delta^{\frac{7}{3}}} \partial_{1} \bigg[ \partial_{1} 
 v \cdot (\varphi \ast \partial_{1} v(\cdot, \cdot))\bigg](y,s). \label{rdc4}   
\end{align}
From ($\ref{pcv2}$), ($\ref{pcv}$), (\ref{pcvo}), (\ref{cv2}), 
($\ref{pcv3}$),  and (\ref{rdc1})-(\ref{rdc4}), the approximation (\ref{eaf}) 
may be rewritten as 
\begin{align}
 & \frac{1}{\delta}\partial_{2} v(y,s)+  \frac{q'(1)}{\delta}  \partial_{1} 
 v (y,s)   +  \frac{[2q'(1) +q''(1)]}{\delta} v (y,s) \partial_{1} v (y,s) -  
 \frac{q'(1)}{\delta^{2}}\partial_{1}^{2} v(y,s)\nonumber\\
 &+ \frac{q'(1)}{\delta^{\frac{4}{3}}} \partial_{1} \left( \varphi \ast 
 \partial_{1} v (\cdot,\cdot)\right)(y,s)  + \frac{3q''(1)}{\delta} v^{2}(y,s) 
 \partial_{1} v(y,s) + \frac{2 q''(1)}{\delta} v^{3}(y,s) \partial_{1} v(y,s)
 \nonumber\\
 & - \frac{q''(1)}{\delta^{2}}
 v(y,s) \partial_{1}^{2} v(y,s) - \frac{q''(1)}{\delta^{2}} v^{2}(y,s) 
 \partial_{1}^{2} v(y,s)+ 
 \frac{q''(1)}{\delta^{3}} \partial_{1} v(y,s) \partial_{1}^{2} v(y,s)\nonumber\\
 &- \frac{q''(1)}{\delta^{2}} \left( \partial_{1} v(y,s)\right)^{2} - 
 \frac{2 q''(1)}{\delta^{2}} v(y,s) \left( \partial_{1} v(y,s)\right)^{2}\nonumber\\
 &+ \frac{q''(1)}{\delta^{\frac{4}{3}}} \partial_{1} \bigg [ v \cdot 
 \left( \varphi \ast \partial_{1} v (\cdot,\cdot) \right) \bigg](y,s) + 
 \frac{q''(1)}{\delta^{\frac{4}{3}}} \partial_{1} \bigg [v^{2} \cdot 
 \left( \varphi \ast \partial_{1} v (\cdot,\cdot)\right) \bigg](y,s) \nonumber\\
 &- \frac{q''(1)}{\delta^{\frac{7}{3}}} \partial_{1} \bigg [ \partial_{1}v 
 \cdot \left( \varphi \ast \partial_{1} v (\cdot,\cdot)\right) \bigg](y,s) + 
 \frac{q''(1)}{\delta^{\frac{5}{3}}} \partial_{1} \bigg [\frac{1}{2} 
 \left( \varphi \ast \partial_{1} v (\cdot,\cdot)\right)^{2} \bigg](y,s)
 \approx 0. \label{eaf2}
\end{align}
\noindent
Now, we define the function
\begin{equation}\label{resc1}
 w(z,r) := \frac{1}{\varepsilon} v(z,r),
\end{equation}
for all $ (z,r) \in \mathbb{R} \times [0,+\infty)$. Let 
$(z,r) \in \mathbb{R} \times [0, + \infty)$. It follows from 
the approximation (\ref{eaf2}) that
\begin{align*}
 & \frac{\varepsilon}{\delta}\partial_{2} w(z,r)+  
 \frac{q'(1)\varepsilon}{\delta}  \partial_{1} w (z,r) +  \frac{[2q'(1) +q''(1)]
 \varepsilon^{2}}{\delta} w (z,r) 
 \partial_{1} w (z,r) -  \frac{q'(1)\varepsilon}{\delta^{2}}
 \partial_{1}^{2}w(z,r) \\
 &+ \frac{q'(1)\varepsilon}{\delta^{\frac{4}{3}}} \partial_{1}
 \left( \varphi \ast \partial_{1} w (\cdot,\cdot)\right)(z,r) + \frac{3q''(1) 
 \varepsilon^{3}}{\delta} w^{2}(z,r) 
 \partial_{1} w(z,r) \\ 
 &+ \frac{2 q''(1) \varepsilon^{4}}{\delta} w^{3}(z,r) \partial_{1} w(z,r)  
 - \frac{q''(1) \varepsilon^{2}}{\delta^{2}}w(z,r) \partial^{2}_{1} w(z,r) - 
 \frac{q''(1) \varepsilon^{3}}{\delta^{2}} w^{2}(z,r) \partial_{1}^{2}w(z,r)\\
 &+\frac{q''(1) \varepsilon^{2}}{\delta^{3}} \partial_{1} w(z,r) 
 \partial_{1}^{2}w(z,r)- \frac{q''(1)\varepsilon^{2}}{\delta^{2}} \left( \partial_{1} 
 w(z,r)\right)^{2}  - \frac{2 q''(1) \varepsilon^{3}}{\delta^{2}} w(z,r) 
 \left( \partial_{1} w(z,r)\right)^{2} \\
 &+ \frac{q''(1)\varepsilon^{2}}{\delta^{\frac{4}{3}}} \partial_{1} 
 \bigg [ w \cdot \left( \varphi \ast \partial_{1} w(\cdot,\cdot) \right) 
 \bigg](z,r) + \frac{q''(1) \varepsilon^{3}}{\delta^{\frac{4}{3}}} \partial_{1} 
 \bigg [w^{2} \cdot \left( \varphi \ast \partial_{1} w(\cdot,\cdot) 
 \right) \bigg](z,r)\\
 &- \frac{q''(1) \varepsilon^{2}}{\delta^{\frac{7}{3}}} \partial_{1} 
 \bigg [ \partial_{1} w \cdot \left( \varphi \ast \partial_{1} w(\cdot,\cdot) 
 \right) \bigg](z,r) + \frac{q''(1)\varepsilon^{2}}{\delta^{\frac{5}{3}}}\partial_{1} 
 \bigg [\frac{1}{2}  \left( \varphi \ast \partial_{1} w(\cdot,\cdot) 
 \right)^{2} \bigg](z,r)\approx 0.
\end{align*}
Hence, 
\begin{align*}
 & \partial_{2} w(z,r)+  q'(1)  \partial_{1} w (z,r) +  
 [2q'(1) +q''(1)]\varepsilon w (z,r) \partial_{1} w (z,r) 
 -\frac{q'(1)}{\delta}\partial_{1}^{2} w(z,r)\nonumber\\
 &+  \frac{q'(1)}{\delta^{\frac{1}{3}}} \partial_{1} 
 \left( \varphi \ast \partial_{1} w (\cdot,\cdot)\right)(z,r)+ 
 3q''(1) \varepsilon^{2} 
 w^{2}(z,r) \partial_{1} w(z,r)\nonumber\\
 & + 2 q''(1) 
 \varepsilon^{3} w^{3}(z,r) \partial_{1} w(z,r) - \frac{q''(1) 
 \varepsilon}{\delta}w(z,r) \partial_{1}^{2} w(z,r) 
 - \frac{q''(1) \varepsilon^{2}}{\delta} w^{2}(z,r) \partial_{1}^{2} 
 w(z,r)\nonumber \\    
\end{align*}
\begin{align}
 &+ \frac{q''(1) \varepsilon}{\delta^{2}} \partial_{1} w(z,r) 
 \partial_{1}^{2} w(z,r) - \frac{q''(1)\varepsilon}{\delta} 
 \left( \partial_{1} w(z,r)\right)^{2}- \frac{2 q''(1) 
 \varepsilon^{2}}{\delta} w(z,r) 
 \left( \partial_{1} w(z,r)\right)^{2} \nonumber \\
 &+  \frac{q''(1)\varepsilon}{\delta^{\frac{1}{3}}} \partial_{1} 
 \bigg [ w \cdot \left( \varphi \ast \partial_{1} w (\cdot,\cdot)\right) 
 \bigg](z,r) +  \frac{q''(1) \varepsilon^{2}}{\delta^{\frac{1}{3}}} 
 \partial_{1}\bigg [w^{2}\cdot \left( \varphi \ast \partial_{1} w (\cdot,\cdot) 
 \right) \bigg](z,r) \nonumber\\
 &- \frac{q''(1) \varepsilon}{\delta^{\frac{4}{3}}} \partial_{1} 
 \bigg [ \partial_{1} w \cdot \left( \varphi \ast \partial_{1} 
 w (\cdot,\cdot) \right) \bigg](z,r)\nonumber\\
 &+  \frac{q''(1)\varepsilon}{\delta^{\frac{2}{3}}} \partial_{1} 
 \bigg [\frac{1}{2}\left( \varphi \ast \partial_{1} 
 w (\cdot,\cdot)\right)^{2} \bigg](z,r)\approx 0. \label{eaf3}
\end{align}
In what follows, we neglect the terms of the form 
$\displaystyle O\left(\frac{\varepsilon}{\delta^{\frac{1}{3}}}\right)$, as 
$\varepsilon \rightarrow 0^{+}$, and $\delta \rightarrow +\infty$, 
more precisely, we remove the terms containing the following factors 
\begin{equation}\label{pdp}
\varepsilon^{2}, \hspace{0.25cm}  \varepsilon^{3}, \hspace{0.25cm} 
\frac{\varepsilon}{\delta}, \hspace{0.25cm} \frac{\varepsilon^{2}}{\delta}, 
\hspace{0.25cm} \frac{\varepsilon}{\delta^{2}} \hspace{0.25cm} 
\frac{\varepsilon^{2}}{\delta^{\frac{1}{3}}}, \hspace{0.25cm} 
\frac{\varepsilon}{\delta^{\frac{4}{3}}}, \hspace{0.25cm}and
\hspace{0.25cm} \frac{\varepsilon}{\delta^{\frac{2}{3}}},
\end{equation}
when $\varepsilon \rightarrow 0^{+}$, and $\delta \rightarrow +\infty$. 
Therefore,
\begin{align}
 &\partial_{2} w(z,r)+  q'(1)  \partial_{1} w (z,r) -  
 \frac{q'(1)}{\delta}\partial_{1}^{2} w(z,r) + [2q'(1) +q''(1)]
 \varepsilon w (z,r)\partial_{1} w (z,r) 
 \nonumber\\
 &+ \frac{q'(1)}{\delta^{\frac{1}{3}}} \partial_{1} 
 \left( \varphi \ast \partial_{1} w (\cdot, \cdot)\right)(z,r)  + 
 q''(1)\frac{\varepsilon}{\delta^{\frac{1}{3}}} \partial_{1} 
 \bigg [ w \cdot \left( \varphi \ast \partial_{1} w (\cdot, \cdot)
 \right) \bigg](z,r)  \approx 0. \label{ecf11}
\end{align}
\noindent
The bedload transport function, proposed by Meyer-Peter, 
and Müller (\cite{D,B,C}), is given by
\begin{equation}\label{Fmpm}
 q(\tau) = C \, ([\tau - \tau_{c} ]_{+})^{\frac{3}{2}},
\end{equation}
for all $ \tau \in \mathbb{R}$,  where 
$C > 0$, $[x]_{+} = \max \lbrace x, 0 \rbrace$, for all $x \in \mathbb{R}$, 
and $\tau_{c} = 0.047$ is a yield stress, called the \textit{Shields stress}.
\begin{Lemma}\label{mpme}
 Let $ q: \mathbb{R} \rightarrow \mathbb{R}$ be the function given 
 by (\ref{Fmpm}). Then, $ q'(1) \not = 0$.
\end{Lemma}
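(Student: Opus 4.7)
The plan is to observe that since $\tau_c = 0.047 < 1$, the point $\tau = 1$ lies strictly to the right of the ``kink'' of the function $[\,\cdot\,]_+$, so on a neighborhood of $1$ the positive-part bracket can be removed and $q$ coincides with the smooth function $\tau \mapsto C(\tau - \tau_c)^{3/2}$.

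More concretely, I would first fix any $r \in (0, 1 - \tau_c)$, for example $r = 1/2$, and note that for every $\tau \in (1-r, 1+r)$ we have $\tau - \tau_c > 1 - r - \tau_c > 0$, hence $[\tau - \tau_c]_+ = \tau - \tau_c$. On this neighborhood of $1$, the formula (\ref{Fmpm}) reduces to $q(\tau) = C(\tau - \tau_c)^{3/2}$, which is the composition of the smooth map $\tau \mapsto \tau - \tau_c$ (taking positive values here) with the $C^\infty$ map $s \mapsto C s^{3/2}$ on $(0,+\infty)$. Hence $q$ is differentiable at $1$ and, by the chain rule,
\begin{equation*}
 q'(1) = \frac{3C}{2}(1 - \tau_c)^{1/2} = \frac{3C}{2}(0.953)^{1/2}.
\end{equation*}
Since $C > 0$ and $1 - \tau_c = 0.953 > 0$, the right-hand side is strictly positive, and in particular $q'(1) \neq 0$.

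There is essentially no obstacle here: the only subtlety is justifying that one may ignore the positive-part operation when differentiating at $\tau = 1$, which is immediate from the strict inequality $\tau_c < 1$. I would simply record the explicit positive value of $q'(1)$ above, since the sharper statement $q'(1) > 0$ will likely be useful (or at least clarifying) in the derivation of the adjusted Fowler equation that follows.
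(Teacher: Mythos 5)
Your proof is correct and rests on the same computation as the paper's: for $\tau$ in a neighborhood of $1$ one has $\tau - \tau_c > 0$, so $q(\tau) = C(\tau-\tau_c)^{3/2}$ there and the chain rule gives $q'(1) = \tfrac{3}{2}C(1-\tau_c)^{1/2} > 0$. The only difference is one of scope: you localize to $(1-r,1+r)$ and never touch the kink at $\tau_c$, whereas the paper also computes the one-sided derivatives at $\tau_c$ so as to record the global formula (\ref{dq}) for $q'$ (which it reuses afterwards to evaluate $q'(1)/(2q'(1)+q''(1))$); your version has the merit of explicitly stating the conclusion $q'(1)=\tfrac{3}{2}C(0.953)^{1/2}\neq 0$, which the paper's proof leaves implicit.
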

\begin{proof}
 We see that 
 \begin{align*}
  q'_{+}(\tau_{c}) &= \lim_{h \rightarrow 0^{+}}\frac{q(\tau_{c} + h) 
  - q(\tau_{c})}{h} = \lim_{h \rightarrow 0^{+}} \frac{C \, h^{\frac{3}{2}}}{h}\\
  &= C \lim_{h \rightarrow 0^{+}}  h^{\frac{1}{2}} =0, 
 \end{align*}
 and 
 \begin{equation*}
  q'_{-}(\tau_{c}) = \lim_{k \rightarrow 0^{-}}\frac{q(\tau_{c} + k) 
  - q(\tau_{c})}{k} = \lim_{k \rightarrow 0^{-}} \frac{0}{k} = 0.  
 \end{equation*}
 So, there exists 
 \begin{equation*}
   q'(\tau_{c}) := \lim_{l \rightarrow 0} \frac{q(\tau_{c} + l) - q(\tau_{c})}{l} =0.
 \end{equation*}
 Therefore, 
 \begin{align}
  q'(\tau) &= \left\lbrace 
  \begin{array}{clll}
   \frac{3}{2} C \,(\tau - \tau_{c})^{\frac{1}{2}}&,& \textup{if} & \tau > \tau_{c},\\
   0&,& \textup{if} & \tau \leq \tau_{c},
  \end{array}
  \right. \nonumber\\
  &=: \frac{3}{2} C \, ([\tau - \tau_{c}]_{+})^{\frac{1}{2}}. \label{dq}
 \end{align}
 \end{proof}
\noindent We now consider the function 
\begin{equation}\label{cv5}
 \phi(e,l) := w \left(\frac{e}{\delta} + \frac{l}{\delta}, 
 \frac{l}{q'(1) \delta} \right),
\end{equation} 
for all $(e,l) \in \mathbb{R} \times [0,+\infty)$.
Let $(e,l) \in \mathbb{R} \times [0,+\infty)$. 
By using (\ref{cv5}), and the Chain Rule, we get  
\begin{align}
 \partial_{1} w\left(\frac{e}{\delta} + \frac{l}{\delta}, 
 \frac{l}{q'(1) \delta} \right) & = \delta \, \partial_{1} \phi (e,l), 
 \label{cv12}\\
 \partial_{1}^{2} w\left(\frac{e}{\delta} + \frac{l}{\delta}, 
 \frac{l}{q'(1) \delta} \right) &= \delta^{2} \, \partial_{1}^{2} 
 \phi(e,l),\label{cv10}
\end{align}
and
\begin{equation}\label{cv11}
 \partial_{2} w\left(\frac{e}{\delta} + \frac{l}{\delta}, 
 \frac{l}{q'(1) \delta} \right) = \, q'(1) \,\delta \,\partial_{2} 
 \phi (e,l)- q'(1)\, \delta \,\partial_{1}\phi(e,l).
\end{equation}
From (\ref{cv12}), we obtain that 
\begin{align}
 \left( \varphi \ast \partial_{1} w (\cdot, \cdot)\right)
 \left(\frac{e}{\delta} + \frac{l}{\delta}, \frac{l}{q'(1) \delta} 
 \right) &:=\Bigg(\int\limits_{0}^{+\infty} \zeta^{-\frac{1}{3}} 
 \partial_{1} w(\cdot - \zeta, \cdot) d\zeta\Bigg)\left(\frac{e}{\delta}
 + \frac{l}{\delta}, \frac{l}{q'(1) \delta} \right)\nonumber\\
 &=\int\limits_{0}^{+\infty} \zeta^{-\frac{1}{3}} \partial_{1} 
 w\bigg(\frac{e}{\delta} + \frac{l}{\delta} - \zeta, \frac{l}{q'(1) 
 \delta}\bigg) d\zeta \nonumber\\
 &=\delta\int\limits_{0}^{+\infty} \zeta^{-\frac{1}{3}} \partial_{1} 
 \phi(e - \delta\zeta, l) d\zeta\nonumber\\
 &= \delta^{\frac{1}{3}} \int\limits_{0}^{+\infty} \mu^{-\frac{1}{3}} 
 \partial_{1} \phi(e - \mu, l) d\mu \nonumber\\
 &=: \delta^{\frac{1}{3}} \left( \varphi \ast \partial_{1} \phi (\cdot, 
 \cdot)\right)(e,l), \label{pt3}
\end{align}
and
\begin{align}
 \partial_{1}& \left( \varphi \ast \partial_{1} w (\cdot, \cdot)\right)
 \left(\frac{e}{\delta} + \frac{l}{\delta}, \frac{l}{q'(1) \delta} \right)
 \nonumber\\
 :=& \,\partial_{1} \Bigg(\int\limits_{0}^{+\infty} \zeta^{-\frac{1}{3}} 
 \partial_{1} w(\cdot - \zeta, \cdot) d\zeta\Bigg)\left(\frac{e}{\delta} 
 + \frac{l}{\delta}, \frac{l}{q'(1) \delta} \right)\nonumber\\   
 :=&\, \lim_{h \rightarrow 0} \frac{1}{h}\Bigg[ \int \limits^{+ \infty}_{0}  
 \zeta^{- \frac{1}{3}}\partial_{1} w\left( \frac{e}{\delta} + \frac{l}{\delta}
 +h- \zeta, \frac{l}{q'(1) \delta}\right) d\zeta \nonumber\\
  &- \int \limits^{+ \infty}_{0}  \zeta^{- \frac{1}{3}}\partial_{1} 
 w\left( \frac{e}{\delta} + \frac{l}{\delta}- \zeta, \frac{l}{q'(1) 
 \delta}\right) d\zeta \Bigg]\nonumber\\
 =&\, \lim_{h\rightarrow 0} \frac{1}{h}\Bigg[ \delta \int 
 \limits^{+ \infty}_{0} \zeta^{- \frac{1}{3}}\partial_{1} 
 \phi\left(  e+ \delta h- \delta \zeta, l\right) d\zeta - \delta
 \int \limits^{+ \infty}_{0}  \zeta^{- \frac{1}{3}}\partial_{1} 
 \phi\left( e- \delta\zeta, l\right) d\zeta \Bigg]\nonumber\\   
 =&\, \lim_{h\rightarrow 0} \frac{1}{h}\Bigg[ \delta^{\frac{1}{3}} 
 \int \limits^{+ \infty}_{0} \mu^{- \frac{1}{3}}\partial_{1} 
 \phi\left(  e+ \delta h- \mu, l\right) d\mu - \delta^{\frac{1}{3}}
 \int \limits^{+ \infty}_{0}  
 \mu^{-\frac{1}{3}}\partial_{1} \phi\left( e- \mu, l\right) d\mu 
 \Bigg]\nonumber\\
 =&\, \delta^{\frac{4}{3}} \lim_{h\rightarrow 0} \frac{1}{\delta h}
 \Bigg[ \int \limits^{+ \infty}_{0} \mu^{- \frac{1}{3}}\partial_{1} 
 \phi\left(  e+ \delta h- \mu, l\right) d\mu - \int \limits^{+ \infty}_{0}  
 \mu^{- \frac{1}{3}}\partial_{1} 
 \phi\left( e- \mu, l\right) d\mu \Bigg]\nonumber\\
 =&\, \delta^{\frac{4}{3}} \lim_{k\rightarrow 0} \frac{1}{k}
 \Bigg[ \int \limits^{+ \infty}_{0} \mu^{- \frac{1}{3}}\partial_{1} 
 \phi\left(  e+ k- \mu, l\right) d\mu - \int \limits^{+ \infty}_{0}  
 \mu^{- \frac{1}{3}}\partial_{1} 
 \phi\left( e- \mu, l\right) d\mu \Bigg]\nonumber\\
 =:&\, \delta^{\frac{4}{3}} \partial_{1}\Bigg(\int\limits_{0}^{+\infty} 
 \mu^{- \frac{1}{3}} \partial_{1} \phi (\cdot - \mu,\cdot)d\mu 
 \Bigg)(e,l)\nonumber\\
 =:&\, \delta^{\frac{4}{3}} \partial_{1} (\varphi \ast \partial_{1} 
 \phi(\cdot,\cdot))(e,l).\label{tnl2}
\end{align}
By employing  (\ref{cv12}), (\ref{pt3}), (\ref{cv5}), and (\ref{tnl2}), 
we see that
\begin{align}
 \partial_{1}& \bigg [ w \cdot \left( \varphi \ast \partial_{1} 
 w (\cdot, \cdot)\right) \bigg] \left(\frac{e}{\delta} + 
 \frac{l}{\delta}, \frac{l}{q'(1) \delta} \right)\nonumber\\
 =&\, \partial_{1} w\left(\frac{e}{\delta} + 
 \frac{l}{\delta}, \frac{l}{q'(1) \delta} \right) 
 \left( \varphi \ast \partial_{1} w (\cdot, \cdot)\right) \left(\frac{e}{\delta} 
 + \frac{l}{\delta}, \frac{l}{q'(1) \delta} \right)+ w\left(\frac{e}{\delta} + 
 \frac{l}{\delta}, \frac{l}{q'(1) \delta} \right)\nonumber\\
 &\cdot \partial_{1} \left( \varphi \ast \partial_{1} 
 w (\cdot, \cdot)\right) \left(\frac{e}{\delta} + 
 \frac{l}{\delta}, \frac{l}{q'(1) \delta} \right) \nonumber\\
 =&\, \delta^{\frac{4}{3}} \partial_{1}\phi (e,l) 
 \left( \varphi \ast \partial_{1} \phi (\cdot, \cdot)\right)
 (e,l)+ \delta^{\frac{4}{3}} \phi(e,l) \partial_{1} 
 (\varphi \ast \partial_{1} \phi(\cdot,\cdot))(e,l) \nonumber\\
 =&\, \delta^{\frac{4}{3}} \partial_{1} 
 \bigg [ \phi \cdot \left( \varphi \ast \partial_{1} 
 \phi (\cdot, \cdot)\right) \bigg](e,l). \label{ttn}    
\end{align}
By using (\ref{cv5})-(\ref{cv11}), (\ref{tnl2}), 
and (\ref{ttn}), we get from (\ref{ecf11}) that 
\begin{align*}
 & q'(1)\, \delta\,\partial_{2} \phi(e,l)-q'(1) 
 \,\delta \,\partial_{1} \phi(e,l)+  q'(1)\, \delta \, 
 \partial_{1} \phi(e,l)-  q'(1)\,\delta \,\partial_{1}^{2} \phi(e,l) \\
 &+ [2q'(1) +q''(1)]\,\varepsilon \, \delta \,\phi(e,l) 
 \partial_{1} \phi(e,l)+ q'(1) \,\delta \, \partial_{1} \left( \varphi \ast 
 \partial_{1} \phi (\cdot,\cdot) \right)(e,l) \\
 &+ q''(1) \,\varepsilon \, \delta \partial_{1} \bigg [ \phi\cdot 
 \left( \varphi \ast \partial_{1} \phi (\cdot,\cdot) \right) \bigg](e,l)  
 \approx 0.
\end{align*}
So,
\begin{align}
 & \partial_{2} \phi(e,l) - \partial_{1}^{2} \phi(e,l)+ 
 \frac{[2 q'(1) + q''(1)]\varepsilon}{q'(1)} \phi(e,l) 
 \partial_{1} \phi(e,l)+\partial_{1} \left( \varphi \ast \partial_{1} 
 \phi (\cdot,\cdot)\right)(e,l)\nonumber\\
 &+\frac{q''(1)\varepsilon }{q'(1)} \partial_{1} 
 \bigg [ \phi\cdot \left( \varphi \ast \partial_{1} 
 \phi (\cdot,\cdot)\right) \bigg](e,l)\approx 0. \label{ecf15}
\end{align}
It follows from (\ref{dq}) that 
\begin{equation*}
 \frac{q'(1)}{2q'(1)+q''(1)} = \frac{1.4295}{3(0.953) + 0.75} \approx 0.3961.
\end{equation*}
We now define the function
\begin{equation}\label{rfb}
 \psi(x,t) := \beta \,\phi(x,t),
\end{equation}
for all $(x,t) \in \mathbb{R} \times [0,+\infty)$, where
\begin{align*}
 \beta := \frac{[2 q'(1) + q''(1)]} {q'(1)} \cdot \varepsilon.   
\end{align*} 
From the approximation  (\ref{ecf15}), for all $(x,t)\in 
\mathbb{R}\times [0,+\infty)$,
\begin{equation}\nonumber
 \begin{split}
  &\frac{1}{\beta}\partial_{2}\psi(x,t) - \frac{1}{\beta}\partial_{1}^{2} 
  \psi(x,t)  + \frac{1}{\beta} \psi(x,t) \partial_{1} \psi(x,t) + 
  \frac{1}{\beta}  \partial_{1} \left( \varphi \ast \partial_{1}
  \psi(\cdot,\cdot) \right)(x,t)\\
  &  + \frac{q''(1)\varepsilon}{q'(1) \beta^{2}} \partial_{1} 
  \bigg [\psi \cdot \left( \varphi \ast \partial_{1} 
  \psi (\cdot,\cdot)\right) \bigg](x,t)  \approx 0.
 \end{split}    
\end{equation}
We denote
\begin{equation}
  \eta := \frac{q''(1)}{2q'(1)+q''(1)}.  
\end{equation}
Finally, we obtain the key result of this paper 
\begin{align}
 &\partial_{2}\psi(x,t) + \partial_{1} 
 \bigg[ \frac{\psi^{2}}{2} - \partial_{1}\psi +  
 (\varphi \ast \partial_{1}\psi(\cdot,\cdot)) + \eta \, \psi \cdot 
 \left( \varphi \ast \partial_{1} \psi (\cdot,\cdot)\right) 
 \bigg](x,t) \approx 0,   \label{ecfff}
\end{align} 
\vspace{-0.5cm}
for all $(x,t)\in \mathbb{R}\times [0,+\infty)$.

%%%%%%%%%%%%%%%%%%%%%%%%%%%%%%%%%%%%%%%%%%%%%%%%%%%%%%%%%%%%%%%%%%%%%%%%%%%%%%%%%%%%%%%%
%%%%%%%%%%%%%%%%%%%%%%%%%%%%%%%%%%%%%%%%%%%%%%%%%%%%%%%%%%%%%%%%%%%%%%%%%%%%%%%%%%%%%%%%
\bibliographystyle {unsrt}

\begin{thebibliography}{99}
\bibitem{A} B. Álvarez-Samaniego, and P. Azerad,   
\textit{Existence of travelling-wave solutions and local well-posedness of the 
Fowler equation}, Discrete and Continuous Dynamical Systems. Series B. A Journal 
Bridging Mathematics and Sciences \textbf{12} (2009), No. 4, 671-692.
\bibitem{D} A. C. Fowler, \textit{Dunes and drumlins}, in ``Geomorphological
Fluid Mechanics" (eds. N. J. Balmforth, and A. Provenzale),  Springer-Verlag, (2001), 
430-454.
\bibitem{B} A. C. Fowler,
\textit{Evolution equations for dunes and drumlins}, Mathematics and 
Environment (Spanish) (Paris, 2002), RACSAM Rev. R. Acad. Cienc. Exactas Fís. 
Nat. Ser. A Mat. \textbf{96} (2002), No. 3, 377-387.
\bibitem{C} A. C. Fowler,
\textit{Mathematics and the Environment}, Lecture Notes, 2004,
(https://people.maths.ox.ac.uk/fowler/courses/mathenvo.html).
\end{thebibliography}

\end{document}